\begin{document}
\title{
Abstract tubes associated with perturbed polyhedra with applications to multidimensional normal probability computations
}

\author{S. KURIKI}

\address{The Institute of Statistical Mathematics \\
10-3 Midoricho, Tachikawa, Tokyo 190-8562, Japan \\
E-mail: kuriki@ism.ac.jp}

\author{T. MIWA}

\address{
National Institute for Agro-Environmental Sciences \\
3-1-3 Kannondai, Tsukuba 305-8604, Japan \\
E-mail: miwa@niaes.affrc.go.jp}

\author{A. J. HAYTER}

\address{
Daniels College of Business, University of Denver \\
2101 S.\,University Blvd., Denver, CO 80208-8921, USA \\
E-mail: Anthony.Hayter@du.edu}

\begin{abstract}
Let $K$ be a closed convex polyhedron defined by a finite number of
linear inequalities.  In this paper
we refine the theory of abstract tubes (Naiman and Wynn, 1997)
associated with $K$ when $K$ is perturbed.  In particular,
we focus on the perturbation that is lexicographic and in an outer direction.
An algorithm for constructing the abstract tube by means of
linear programming and its implementation are discussed.
Using the abstract tube for perturbed $K$
combined with the recursive integration technique
proposed by Miwa, Hayter and Kuriki (2003),
we show that the multidimensional normal probability
for a polyhedral region $K$ can be computed efficiently.
In addition, abstract tubes and the distribution functions of 
studentized range statistics are exhibited as numerical examples.
\end{abstract}

\keywords{
Abstract Tube; Inclusion-Exclusion Identity; Lexicographic Method; Linear Programming; Multiple Comparisons; Perturbation; Studentized Range.}

\bodymatter

\section{Introduction}
\label{sec:introduction}

Let $A=(a_1,\ldots,a_m)$ be an $n\times m$ matrix such that
$a_i\ne 0$ for all $i$, and let $b=(b_1,\ldots,b_m)^\top\in\mathbb{R}^m$ be
an $m\times 1$ constant vector.
Define a closed convex polyhedron by
\[
 K=\{ x\in\mathbb{R}^n \mid A^\top x \le b \}.
\]
Throughout the paper, $K$ is assumed to be nonempty.
Suppose that $x\in\mathbb{R}^n$ is a random vector distributed
as an $n$-dimensional standard normal distribution $N_n(0,I_n)$.
The primary motivation and hence one of the purposes of this study is
to evaluate the $n$-dimensional normal probability
\begin{equation}
\label{PK}
 P(K)=P_n(K):=\Pr(x\in K).
\end{equation}

When $m=n$ and $A$ is non-singular, $K$ is a cone referred to as
the simple cone formed as the intersection of
$n$ half spaces located in the general positions
(Barvinok, 2002\cite{Barvinok02}).
In this paper, we do not require that the apex of cone lies at the origin.
In the case where $m<n$ and $\mathrm{rank}(A)=m$, $K$ is decomposed as
$K=K_1\oplus \mathbb{R}^{n-m}$
with $K_1$ an $m$-dimensional simple cone,
where ``$\oplus$'' means the orthogonal direct sum.
In this case
the probability (\ref{PK}) is obtained as $P_n(K)=P_m(K_1)$.
For such a simple cone $K$, Miwa, Hayter and Kuriki (2003)\cite{Miwa-etal03}
proposed a recursive integration algorithm to evaluate the probability $P(K)$
by generalizing the idea of Abrahamson (1964)\cite{Abrahamson64}.
This algorithm is practically useful for dimensions $n$ up to 20,
and is available in the R library \texttt{mvtnorm}\cite{mvtnorm,Genz-Bretz09}.

In this paper, using this integration algorithm as a building block,
we demonstrate that the probability $P(K)$ for any polyhedron $K$
can be computed by means of 
the abstract tube ideas proposed by Naiman and Wynn (1997)\cite{Naiman-Wynn97}.
This method provides a sophisticated version of
the inclusion-exclusion identity,
and divides the complimentary set $K^c$ into signed cones.
By calculating the normal probability for each cone
and summing them up, we obtain $P(K)=1-P(K^c)$.
However, as we shall see later,
the resulting cones are not necessarily simple, and
we need to introduce a perturbation into the linear inequality system.
We focus on a particular type of perturbation,
and refine the theory of abstract tubes 
for a perturbed inequality system in this setting.
Some statements are simply taken from
Naiman and Wynn (1997)\cite{Naiman-Wynn97},
but some of them are novel.
For example, we show that under our perturbation
the perturbed system always defines an abstract tube
instead of a ``weak abstract tube''.

Moreover, we show that the abstract tube for the perturbed system
can be constructed by means of a standard linear programming technique.
We propose that such an algorithm should be used,
and we make some remarks that should be
incorporated into its implementation.
The prototype R program is available from the authors.

The construction of the paper is as follows.
In Section \ref{sec:perturbation}, the theory of
the abstract tube for a perturbed system is investigated.
In Section \ref{sec:construction},
the algorithm and its implementation for constructing
the abstract tube is discussed.
In Section \ref{sec:studentized}, we exhibit numerical examples.
For Tukey's studentized range statistics,
the abstract tubes are constructed in various settings, and their
distribution functions are computed.

\section{Abstract tubes for a perturbed inequality system}
\label{sec:perturbation}

In this section we investigate the abstract tube
associated with a closed convex polyhedron.
Because of the reason stated later, we focus on the case where
the inequality system defining the polyhedron is perturbed.
We propose a lexicographic perturbation in an outer direction
 (see (\ref{Ke})) and refine the theoretical results of
Naiman and Wynn (1997)\cite{Naiman-Wynn97} in this setting.

Define $m$ half spaces
\[
 H_i = \{ x\in\mathbb{R}^n \mid a_i^\top x \le b_i \},\ \ i=1,\ldots,m.
\]
The complement and the boundary of $H_i$ are denoted by
\[
 H_i^c = \{ x\in\mathbb{R}^n \mid a_i^\top x > b_i \} \quad\mbox{and}\quad
 \partial H_i = \{ x\in\mathbb{R}^n \mid a_i^\top x = b_i \},
\]
respectively.  Since
$P(K) = P(\cap_{i=1}^m H_i) = 1-P(\cup_{i=1}^m H_i^c)$,
we can evaluate $P(\cup_{i=1}^m H_i^c)$ instead of $P(\cap_{i=1}^m H_i)$.

The family of the complements of the half spaces is denoted by
\[
 \mathcal{H} = \{ H_i^c \mid i=1,\ldots,m \}.
\]
Let
\[
 F_i = \partial H_i \cap K,\ \ i=1,\ldots,m.
\] 
If $F_i\ne\emptyset$, $F_i$ is a face of $K$.
Define a family of subsets of the indices $\{1,\ldots,m\}$ by
\[
 \mathcal{F} = \{ J\subseteq \{1,\ldots,m\} \mid
 J \ne \emptyset,\,\cap_{i\in J} F_i \ne \emptyset \}.
\]
Then, $\mathcal{F}$ forms an (abstract) simplicial complex.
$\{ \cap_{i\in J} F_i \mid J\in\mathcal{F} \}$ is the family of
all faces (vertex, edge, ..., facet) of $K$.
Note that $\cap_{i\in J_1} F_i=\cap_{i\in J_2} F_i$ can happen
even though $J_1\ne J_2$.

Let
\[
 \mathbf{1}_S(x) = \begin{cases}
 1, & \mbox{if $x\in S$}, \\
 0, & \mbox{otherwise},
\end{cases}
\]
be the indicator function of the set $S$.
The following is Theorem 2 of Naiman and Wynn (1997)\cite{Naiman-Wynn97}.
\begin{proposition}
\label{prop:at}
The pair $(\mathcal{H},\mathcal{F})$ is an abstract tube in the sense that
\begin{equation}
\label{at}
 \mathbf{1}_{\cup_{i=1}^m H_i^c}(x) = \sum_{J\in\mathcal{F}}
 (-1)^{|J|-1} \mathbf{1}_{\cap_{i\in J} H_i^c}(x)
 \quad\mbox{for all $x\in\mathbb{R}^n$},
\end{equation}
where $|J|$ is the cardinality of the set $J$.
\end{proposition}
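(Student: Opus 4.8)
The plan is to verify the pointwise identity (\ref{at}) by fixing an arbitrary $x\in\mathbb{R}^n$ and analyzing both sides according to the set of half-space complements containing $x$. Write $I(x)=\{i\in\{1,\ldots,m\}\mid x\in H_i^c\}=\{i\mid a_i^\top x>b_i\}$. If $I(x)=\emptyset$, then $x\in K$ and both sides of (\ref{at}) vanish, so there is nothing to prove. Hence I may assume $I(x)\ne\emptyset$; then the left-hand side equals $1$, and the right-hand side equals $\sum_{J\in\mathcal{F},\,J\subseteq I(x)}(-1)^{|J|-1}$, since $x\in\cap_{i\in J}H_i^c$ precisely when $J\subseteq I(x)$. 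So the whole statement reduces to the combinatorial claim
\[
 \sum_{\substack{J\in\mathcal{F}\\ J\subseteq I(x)}} (-1)^{|J|-1} = 1
 \qquad\text{whenever } I(x)\ne\emptyset.
\]

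The key geometric input is to understand which subsets $J\subseteq I(x)$ lie in $\mathcal{F}$, i.e.\ satisfy $\cap_{i\in J}F_i\ne\emptyset$. First I would dispose of the degenerate possibility: the restricted complex $\mathcal{F}|_{I(x)}=\{J\in\mathcal{F}\mid J\subseteq I(x)\}$ could in principle be empty (no face of $K$ meets all the violated constraints simultaneously), but I would argue it is not, using nonemptiness of $K$ together with a continuity/convexity argument — move along the segment from $x$ to a point of $K$; the constraints in $I(x)$ are violated at $x$ and satisfied in $K$, so some of them become tight before leaving $K$, producing a nonempty face; more care is needed to get a single common face, which is exactly where the argument has to be made carefully. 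The substantive point is that $\mathcal{F}|_{I(x)}$ is the face poset (an abstract simplicial complex) of a nonempty polyhedron, namely of the intersection of $K$ with the supporting structure determined by $I(x)$; a contractible such complex has reduced Euler characteristic zero, and $\sum_{J\in\mathcal{F}|_{I(x)}}(-1)^{|J|-1}=-\tilde\chi(\mathcal{F}|_{I(x)})=1$. In other words, I would identify $\mathcal{F}|_{I(x)}$ with the nerve of a family of closed convex sets with nonempty convex intersection and invoke the nerve lemma (or, more elementarily, shell the relevant face complex), so that its geometric realization is contractible.

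An alternative, more self-contained route avoids topology entirely. For fixed $x$ with $I(x)\ne\emptyset$, pick any $i_0\in I(x)$ and any point $x_0\in K$; I would show that a face $F_{i_0}\cap(\cdot)$ plays the role of a ``cone point'' making the subcomplex $\mathcal{F}|_{I(x)}$ collapsible onto a single vertex, which forces the alternating sum to telescope to $1$. Concretely one shows that for every $J\in\mathcal{F}$ with $J\subseteq I(x)$ one also has $J\cup\{i_0\}\in\mathcal{F}$, i.e.\ adding the constraint $i_0$ never destroys the nonemptiness of the face — this is the crux and should follow from a convexity argument about the polyhedron $K$ and the chosen outer/lexicographic structure, using that the faces $F_i$ are themselves convex and that their pairwise (hence finite) intersections behave coherently. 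Granting this ``cone'' property, the contribution of the pairs $\{J,J\cup\{i_0\}\}$ cancels except for the singleton $\{i_0\}$ itself, leaving $(-1)^{|\{i_0\}|-1}=1$.

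The main obstacle is precisely establishing that $\mathcal{F}|_{I(x)}$ is the face complex of a genuinely nonempty (hence contractible) polyhedron — equivalently, the cone property $J\in\mathcal{F},\ J\subseteq I(x)\ \Rightarrow\ J\cup\{i_0\}\in\mathcal{F}$. One must rule out the pathology where several violated constraints are individually ``reachable'' from $K$ but have no common face; this is where the hypothesis that $K$ is nonempty and convex, and the special perturbation introduced in the paper (see (\ref{Ke})), do the work. Everything else — the case split on $I(x)$, the reduction to the alternating sum, and the Euler-characteristic/telescoping bookkeeping — is routine.
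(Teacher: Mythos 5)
The paper itself gives no proof of this proposition---it is quoted verbatim as Theorem 2 of Naiman and Wynn (1997)---so your attempt can only be judged on its own merits. Your reduction is fine: fixing $x$, setting $I(x)=\{i \mid a_i^\top x>b_i\}$, and observing that (\ref{at}) amounts to $\sum_{J\in\mathcal{F},\,J\subseteq I(x)}(-1)^{|J|-1}=1$ whenever $I(x)\ne\emptyset$ is exactly the right bookkeeping, and nonemptiness of the restricted complex can indeed be made rigorous (e.g.\ via the nearest point $y$ of $K$ to $x$: since $x-y$ lies in the normal cone at $y$, some constraint active at $y$ is violated at $x$). The gap is the crux you yourself flag, and neither of your two proposed mechanisms for it is correct. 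Route (a) asserts that $\{F_i\}_{i\in I(x)}$ is a family of convex sets ``with nonempty convex intersection''; this is false in general (already for a hexagon in the plane three consecutive edge constraints can be violated while the three edges have empty common intersection), and $\mathcal{F}|_{I(x)}$ is not the face poset of a polyhedron. What the nerve lemma actually requires is that the \emph{union} $\cup_{i\in I(x)}F_i$ be contractible, which is the entire geometric content of the theorem and is nowhere argued. Route (b), the ``cone point'' property ($J\in\mathcal{F}$, $J\subseteq I(x)$ $\Rightarrow$ $J\cup\{i_0\}\in\mathcal{F}$), is simply false, not only for an arbitrary $i_0\in I(x)$ but for \emph{every} choice of $i_0$: take a flat convex polygon whose upper chain consists of four edges with nearly vertical outward normals and let $x$ lie far above it; then $I(x)$ consists of those four edges, the nerve is a path on four vertices (non-adjacent edges do not meet), and no vertex is joined to all others, yet the alternating sum is $4-3=1$. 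So the telescoping argument collapses, and no convexity argument can rescue it.

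A further sign of confusion is your appeal to ``the special perturbation introduced in the paper (see (\ref{Ke}))'': Proposition \ref{prop:at} concerns the \emph{unperturbed} complex $\mathcal{F}$, and the perturbation enters only later, in Lemma \ref{lem:general} and Theorem \ref{thm:at}. The missing step has to be supplied by a genuine topological argument of the kind Naiman and Wynn use: show that $\mathcal{F}|_{I(x)}$ is the nerve of the closed convex faces $\{F_i\}_{i\in I(x)}$ and that their union (the portion of $\partial K$ whose constraints are violated at $x$) is contractible---for instance by a deformation retraction built from the projection onto $K$ or a central projection from $x$---and then invoke the nerve theorem; contractibility gives $\chi(\mathcal{F}|_{I(x)})=1$ and hence the identity (and is in fact what the term ``abstract tube'' demands, since the Euler-characteristic statement alone would not yield the associated truncation inequalities). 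As it stands, your proposal correctly frames the problem but does not prove it.
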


Assuming that $x\in\mathbb{R}^n$ is a standard Gaussian vector
distributed as $N_n(0,I_n)$,
and taking the expectation of (\ref{at}), we obtain the formula
\[
  1-P(K) = \sum_{J\in\mathcal{F}} (-1)^{|J|-1} P(\cap_{i\in J} H_i^c).
\]
In the right side,
if $\cap_{i\in J} H_i^c$ forms an $n$-dimensional simple cone or the
direct sum of a simple cone and a linear subspace,
we can use the algorithm of Miwa et al.\ (2003)\cite{Miwa-etal03}
for evaluating the probability $P(\cap_{i\in J} H_i^c)$.
However, this is not always the case, and for it to be the case
the condition of ``general position'' is required.
Before defining this notion,
we explain two examples of abstract tubes that do not meet this condition.

\newcommand{\Hspace}{\hspace*{15mm}}
\begin{example}
\label{ex:pyramid}
Consider a polyhedral cone called a ``pyramid'' $K\subset\mathbb{R}^3$
consisting of $(x_1,x_2,x_3)$ satisfying
\begin{align*}
-x_1 - x_2 + x_3 \le 1 & \Hspace [1] \\
-x_1 + x_2 + x_3 \le 1 & \Hspace [2] \\
+x_1 + x_2 + x_3 \le 1 & \Hspace [3] \\
+x_1 - x_2 + x_3 \le 1 & \Hspace [4]
\end{align*}
The boundaries of these half spaces
are not in the general position in the sense that
4 hyperplanes share a point $(0,0,1)$ in $\mathbb{R}^3$.
The simplicial complex is shown to be
\begin{equation}
\label{F}
\mathcal{F} = \{ 1,\, 2,\, 3,\, 4,\,
  12,\, 13,\, 14,\, 23,\, 24,\, 34,\, 123,\, 124,\, 134,\, 234,\, 1234 \}.
\end{equation}
In the expression above ``134'' means $\{1,3,4\}$ for example,
and this convention is adopted throughout the paper.
Applying this to the Proposition \ref{prop:at}, we can decompose $1-P(K)$ into
$|\mathcal{F}|=15$ terms.
However, this decomposition is not appropriate for our purpose
because the term $1234$ stands for a non-simple cone (pyramid), and hence
this decomposition does not simplify our problem.
\end{example}

\begin{example}
\label{ex:redundant}
Consider the following system containing a redundant inequality:
\begin{align*}
 +x_1 - x_2 + 0 x_3 \le 0 & \Hspace [1] \\
 -x_1 - x_2 + 0 x_3 \le 0 & \Hspace [2] \\
 0 x_1 - x_2+ 0 x_3 \le 0 & \Hspace [3]
\end{align*}
The third inequality is redundant.
This example is essentially 2-dimensional, and
3 boundaries (lines) meet at the origin in $\mathbb{R}^2$.
The simplicial complex is shown to be
\[
 \mathcal{F} = \{ 1,\, 2,\, 3,\, 12,\, 13,\, 23,\, 123 \}.
\]
This decomposition is also unsuitable
because the 3-dimensional term $123$
appears although this example is essentially 2-dimensional.
\end{example}

To avoid unfavorable events such as have appeared in these examples
we need the assumption that $\{\partial H_i\}$ is
in the general position defined below.

\begin{definition}
The set of $m$ hyperplanes in $\mathbb{R}^n$, for example,
$\{\partial H_i\subset\mathbb{R}^n \mid i=1,\ldots,m\}$,
is said to be in the general position when there does not exist
$J\subseteq\{1,\ldots,m\}$ such that $\cap_{i\in J} \partial H_i$
contains a $\max\{n+1-|J|,0\}$ dimensional affine subspace.
\end{definition}

\begin{remark}
This definition is equivalent to that in Stanley (2007)\cite{Stanley07}.
The definition by Naiman and Wynn (1997)\cite{Naiman-Wynn97}
is weaker than ours.
In their definition, when there does not exist $J\subseteq\{1,\ldots,m\}$,
$|J|=n+1$, $\cap_{i\in J} \partial H_i\ne\emptyset$,
the set of hyperplanes is said to be in the general position.
Example \ref{ex:redundant} is in the general position
in their definition,
whereas it is not in the general position in our definition.
\end{remark}

When a given $\{\partial H_i\}$ is not in the general position,
we can apply an infinitesimal perturbation to rearrange the system
into the general position.
In this paper we restrict our attention to the lexicographic perturbation
in an outer direction proposed below.

For $b=(b_i)\in\mathbb{R}^m$ and $\varepsilon>0$, define 
\[
 b(\varepsilon) = (b_i(\varepsilon)) = (b_i + \varepsilon^i)
 = (b_1 + \varepsilon^1,\ldots,b_m + \varepsilon^m)^\top
\]
($\varepsilon^i$ is $\varepsilon$ to the power $i$),
and
\begin{equation}
\label{Ke}
 K(\varepsilon) = \{ x\in\mathbb{R}^n \mid A^\top x \le b(\varepsilon) \}.
\end{equation}
Define $H_i(\varepsilon)$, $F_i(\varepsilon)$ and $\mathcal{F}(\varepsilon)$
similarly.
Note that $K=K(0)$, $H_i=H_i(0)$, $F_i=F_i(0)$ and
$\mathcal{F}=\mathcal{F}(0)$.

The following two lemmas, Lemmas \ref{lem:general} and \ref{lem:number},
are refinements of Lemma 1 of Naiman and Wynn (1997)\cite{Naiman-Wynn97}
under this lexicographic perturbation
and with the stronger definition of the general position.
\begin{lemma}
\label{lem:general}
For all sufficiently small $\varepsilon>0$,
\vspace*{-1.5mm}
\begin{itemize}
\item[(i)]
the family of hyperplanes
$\{\partial H_i(\varepsilon) \mid i=1,\ldots,m\}$ is in the general position,
and
\smallskip
\item[(ii)]
$\mathcal{F}(\varepsilon)$ does not depend on $\varepsilon$.
Write $\mathcal{F}(0+)=\mathcal{F}(\varepsilon)$.
$\mathcal{F}(0+)$ is a simplicial complex.
\end{itemize}
\end{lemma}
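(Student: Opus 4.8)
The plan is to analyze the perturbed intersection subspaces $\cap_{i\in J}\partial H_i(\varepsilon)$ as $\varepsilon\downarrow 0$ and show that, except for finitely many ``bad'' values of $\varepsilon$, the combinatorics stabilizes. First I would fix $J\subseteq\{1,\ldots,m\}$ and consider the affine system $a_i^\top x = b_i+\varepsilon^i$, $i\in J$. The solution set is nonempty iff the corresponding bordered linear system is consistent; by Cramer/rank considerations this consistency is governed by the vanishing of certain polynomials in $\varepsilon$ (minors of $A_J^\top$ augmented by the column $b_J(\varepsilon)$). Since each such polynomial is either identically zero or has finitely many roots, there is a finite set $E\subset(0,\infty)$ outside of which the rank of every augmented system $[A_J \mid b_J(\varepsilon)]$ equals its generic (maximal over $\varepsilon$) value. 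The same argument applied simultaneously to $\cap_{i\in J}F_i(\varepsilon)=\cap_{i\in J}\partial H_i(\varepsilon)\cap K(\varepsilon)$: nonemptiness of this set is a feasibility question for a finite linear program whose right-hand side entries are the polynomials $b_i+\varepsilon^i$, and the optimal-basis structure of such a parametric LP is constant on each subinterval between finitely many breakpoints. Taking the union of all these finite exception sets over the (finitely many) $J$ gives a single finite set $E$, and for $0<\varepsilon<\min E$ both (i) the rank pattern of every $\{\partial H_i(\varepsilon)\}_{i\in J}$ and (ii) the emptiness/nonemptiness of every $\cap_{i\in J}F_i(\varepsilon)$ are independent of $\varepsilon$. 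This immediately yields part (ii): $\mathcal{F}(\varepsilon)=\{J\neq\emptyset \mid \cap_{i\in J}F_i(\varepsilon)\neq\emptyset\}$ is constant, so we may call it $\mathcal{F}(0+)$; it is a simplicial complex because $\cap_{i\in J'}F_i(\varepsilon)\supseteq\cap_{i\in J}F_i(\varepsilon)$ whenever $J'\subseteq J$, so nonemptiness is inherited by subsets.

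For part (i) I would argue that the \emph{outward lexicographic} direction of perturbation is exactly what forces the strong general-position property. Suppose, for contradiction, that for a sequence $\varepsilon_k\downarrow 0$ there is $J$ with $\cap_{i\in J}\partial H_i(\varepsilon_k)$ containing an affine subspace of dimension $d\ge\max\{n+1-|J|,0\}$. By the finiteness argument above we may take $J$ fixed along the sequence, so in fact $\cap_{i\in J}\partial H_i(\varepsilon)$ has affine dimension $\ge d$ for all small $\varepsilon>0$; the affine hull is a \emph{translate} of the fixed linear space $L_J=\{y\mid a_i^\top y=0,\ i\in J\}$, whose dimension is $n-\mathrm{rank}(A_J)$. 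The hypothesis then reads $n-\mathrm{rank}(A_J)\ge n+1-|J|$, i.e. $\mathrm{rank}(A_J)\le |J|-1$, so the vectors $\{a_i\}_{i\in J}$ are linearly dependent: there is a nonzero $(\lambda_i)_{i\in J}$ with $\sum_{i\in J}\lambda_i a_i=0$. Consistency of $a_i^\top x=b_i+\varepsilon^i$ then forces $\sum_{i\in J}\lambda_i(b_i+\varepsilon^i)=0$ for all small $\varepsilon>0$; since $\{\varepsilon^i\}$ are linearly independent functions of $\varepsilon$, this gives $\lambda_i=0$ for every $i\in J$, contradicting $\lambda\neq 0$. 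Hence no such $J$ exists for small $\varepsilon>0$, which is precisely the strong general position.

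The main obstacle I anticipate is the bookkeeping in the parametric-LP step for part (ii): one must make precise the claim that ``emptiness of $\cap_{i\in J}F_i(\varepsilon)$ is eventually constant'' uniformly in $J$. The clean way is to observe that $\cap_{i\in J}F_i(\varepsilon)\neq\emptyset$ iff the system $a_i^\top x=b_i+\varepsilon^i\ (i\in J),\ a_j^\top x\le b_j+\varepsilon^j\ (j\notin J)$ is feasible, and to invoke a quantifier-elimination / Farkas-type dichotomy: feasibility fails iff there is a dual certificate, and the existence of such a certificate with entries that are rational functions of $\varepsilon$ is again a finite union of sign conditions on polynomials in $\varepsilon$. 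One could alternatively cite the standard fact that the feasible region of a linear program with polynomial right-hand side has a combinatorial type that is piecewise-constant in the parameter with finitely many breakpoints (the ``parametric right-hand side'' theory). Either route disposes of the difficulty; the rest is routine.
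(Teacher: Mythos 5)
Your proposal is correct, and for part (i) it rests on the same underlying mechanism as the paper, though the execution differs; for part (ii) it is genuinely more detailed than what the paper prints. For (i), the paper also reduces a violation of general position to $\mathrm{rank}(A_J)\le|J|-1$ together with consistency of $A_J^\top x=b_J(\varepsilon)$, but it then quantifies the obstruction by an explicit nonnegative polynomial $f_J(\varepsilon)=\det(FF^\top)\,b_J(\varepsilon)^\top\{I-F^\top(FF^\top)^{-1}F\}b_J(\varepsilon)$ obtained from a rank factorization $A_J=EF$, notes that a bad configuration forces $f_J(\varepsilon)=0$, and chooses $\delta_J>0$ below the smallest positive root, finally setting $\delta_*=\min_J\delta_J$. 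You instead pick a null-space vector $\lambda\neq0$ with $\sum_{i\in J}\lambda_i a_i=0$, observe that consistency forces the polynomial $\sum_{i\in J}\lambda_i(b_i+\varepsilon^i)$ to vanish on a set of $\varepsilon$ with an accumulation point at $0$, and conclude $\lambda=0$ from the linear independence of the distinct monomials $\varepsilon^i$ — exactly the feature of the outward lexicographic perturbation that makes the paper's $f_J$ not identically zero. Your version avoids the Gram-determinant bookkeeping and the (somewhat loose) degree claim in the paper, at the price of a contradiction-by-subsequence step where the paper gets explicit thresholds $\delta_J$ and uniformity by taking a minimum over the finitely many $J$. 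For (ii) the paper disposes of the claim in one sentence, asserting that the proof of (i) shows feasibility or infeasibility of each $J$ is unchanged on $(0,\delta_*)$, even though membership $J\in\mathcal{F}(\varepsilon)$ also involves the inequalities $a_j^\top x\le b_j(\varepsilon)$, $j\notin J$, which the argument for (i) never touches; your parametric-feasibility argument (Farkas certificates with finitely many extreme rays, or semialgebraicity of the set of feasible $\varepsilon$, hence finitely many breakpoints) supplies precisely the justification the paper leaves implicit, and your closure-under-nonempty-subsets remark gives the simplicial-complex property, which the paper asserts by analogy with $\mathcal{F}$. So your route is sound, and on part (ii) it is in fact more complete than the printed proof.
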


\begin{proof}
(i)
Fix $J\subseteq\{1,\ldots,m\}$.  Consider first the case $|J|\le n$.
Let $A_J=(a_i)_{i\in J}$ be $n\times |J|$, and let
$b_J(\varepsilon)=(b_i(\varepsilon))_{i\in J}$ be $|J|\times 1$.
Suppose that $\cap_{i\in J} \partial H_i(\varepsilon)$ contains an $n+1-|J|$
dimensional affine subspace $\{x = C d + x_0 \mid d\in\mathbb{R}^{n+1-|J|}\}$,
where $C=(c_1,\ldots,c_{n+1-|J|})$ is an $n\times (n+1-|J|)$ matrix
with linearly independent column vectors.
Then $A_J^\top (C d + x_0) - b_J(\varepsilon) = 0$ for all $d$, and hence
$A_J^\top C =0$ and $A_J^\top x_0 - b_J(\varepsilon) = 0$ hold.
Since $\mathrm{rank}(C)=n+1-|J|$, we see that $\mathrm{rank}(A_J)\le |J|-1$.
Hence $A_J$ can be written as $A_J=E F$, where
$E$ is $n\times r$ and $F$ is $r\times |J|$
with $r=\mathrm{rank}(A_J)\le |J|-1$.
Substituting this, we have 
\[
 0 = A_J^\top x_0 - b_J(\varepsilon) = F^\top (E^\top x_0) - b_J(\varepsilon)
 = (F^\top, -b_J(\varepsilon)) \begin{pmatrix}E^\top x_0 \\ 1 \end{pmatrix}.
\]
Therefore,
\begin{align}
 f_J(\varepsilon)
& := \det
 \biggl\{ \begin{pmatrix} F \\ -b_J(\varepsilon)^\top \end{pmatrix}
 (F^\top, -b_J(\varepsilon)) \biggr\} \nonumber \\
& = \det(F F^\top) \times b_J(\varepsilon)^\top
 \bigl\{ I - F^\top (F F^\top)^{-1} F \bigr\} \, b_J(\varepsilon) = 0.
\label{fJ}
\end{align}
Since $\mathrm{rank}(F)=r$,
$f_J(\varepsilon)$ is a polynomial in $\varepsilon$
of at least degree $2r\,(\ge 2)$.  Then
\[
 \delta_J=\sup\{\delta>0 \mid f_J(\varepsilon)>0,\ %
 \forall\varepsilon\in (0, \delta) \}>0,
\]
since the number of zeros of $f_J$ is finite.
For all $\varepsilon\in (0,\delta_J)$,
$\cap_{i\in J} \partial H_i(\varepsilon)$ contains no $(n+1-|J|)$-dimensional
affine subspace.

For the case $|J|\ge n+1$, assume that
$x_0\in\cap_{i\in J} \partial H_i(\varepsilon)$ exists.
Since $r=\mathrm{rank}(A_J)\le n\le |J|-1$,
we can follow the proof for $|J|\le n$ above.
We define the polynomial $f_J(\varepsilon)$ in (\ref{fJ}),
and conclude that for all $\varepsilon\in (0,\delta_J)$,
$\cap_{i\in J} \partial H_i(\varepsilon)=\emptyset$.

Let $\delta_*=\min_J \delta_J>0$.
For all $\varepsilon\in (0,\delta_*)$,
$\{\partial H_i(\varepsilon) \mid i=1,\ldots,m\}$ is in the general position.

(ii) The proof for (i) implies that for a given $J\subseteq\{1,\ldots,m\}$
the feasibility or infeasibility is unchanged
for all $\varepsilon\in(0,\delta_*)$. 
\end{proof}

\begin{corollary}
\label{co:li}
For each $J\in\mathcal{F}(0+)$,
\vspace*{-1.5mm}
\begin{itemize}
\item[(i)]
$\{ a_i \mid i\in J\}$ is linearly independent, and

\smallskip
\item[(ii)]
the number of elements $|J|$ is at most $\mathrm{rank}(A)$.
\end{itemize}
\end{corollary}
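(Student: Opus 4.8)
The plan is to derive both assertions from Lemma \ref{lem:general}(i), i.e.\ from the fact that $\{\partial H_i(\varepsilon)\mid i=1,\ldots,m\}$ is in the general position for all sufficiently small $\varepsilon>0$. The key preliminary observation is that $J\in\mathcal{F}(0+)$ means $\cap_{i\in J}F_i(\varepsilon)\ne\emptyset$ for all small $\varepsilon>0$; since $F_i(\varepsilon)\subseteq\partial H_i(\varepsilon)$, this forces $\cap_{i\in J}\partial H_i(\varepsilon)\ne\emptyset$, and I would fix a point $x_0$ in this intersection, which then satisfies $A_J^\top x_0=b_J(\varepsilon)$ with $A_J=(a_i)_{i\in J}$.

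First I would dispose of the case $|J|\ge n+1$: then $\max\{n+1-|J|,0\}=0$, so the single point $x_0$ is a $0$-dimensional affine subspace contained in $\cap_{i\in J}\partial H_i(\varepsilon)$, contradicting the general position. Hence $|J|\le n$. Next, for (i), I would argue by contradiction: if $\{a_i\mid i\in J\}$ is linearly dependent, set $r=\mathrm{rank}(A_J)\le|J|-1$. The solution set of $A_J^\top x=b_J(\varepsilon)$ equals $x_0+\ker(A_J^\top)$, an affine subspace of dimension $n-r\ge n-(|J|-1)=n+1-|J|\ge 1$. Since an affine subspace of dimension at least $n+1-|J|$ contains one of dimension exactly $n+1-|J|=\max\{n+1-|J|,0\}$, the intersection $\cap_{i\in J}\partial H_i(\varepsilon)$ contains such a subspace, again contradicting Lemma \ref{lem:general}(i). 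Therefore $\{a_i\mid i\in J\}$ is linearly independent.

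Finally, (ii) follows at once from (i): the vectors $a_i$ are columns of $A$, so if $\{a_i\mid i\in J\}$ is linearly independent then $|J|=\mathrm{rank}((a_i)_{i\in J})\le\mathrm{rank}(A)$.

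I do not anticipate a real obstacle; the only points needing care are the separate bookkeeping for $|J|\ge n+1$ versus $|J|\le n$ in the definition of the general position (the quantity $\max\{n+1-|J|,0\}$), and remembering that nonemptiness of the perturbed faces $\cap_{i\in J}F_i(\varepsilon)$ for small $\varepsilon$ is precisely what $J\in\mathcal{F}(0+)$ encodes, which is what licenses the application of Lemma \ref{lem:general}(i).
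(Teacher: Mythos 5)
Your proposal is correct and follows essentially the same route as the paper: both deduce $|J|\le n$ from the fact that the (perturbed) intersection contains a point while general position forbids a $0$-dimensional affine subspace when $|J|\ge n+1$, then derive a contradiction from linear dependence by exhibiting the affine subspace $x_0+\ker(A_J^\top)$ of dimension at least $n+1-|J|$ inside $\cap_{i\in J}\partial H_i(\varepsilon)$, and obtain (ii) directly from (i). If anything, your version is slightly more careful than the paper's in explicitly working with the perturbed hyperplanes $\partial H_i(\varepsilon)$, which is where the general position actually holds.
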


\begin{proof}
(i)
Note first that $\cap_{i\in J} \partial H_i$ contains a point
 (i.e., 0-dimensional affine subspace), and because
$\{ \partial H_i \mid i\in J\}$ is in the general position
(Lemma \ref{lem:general}, (i)), it should be that
$\max\{n+1-|J|,0\}\ge 1$ or $|J|\le n$.

Suppose that $\{ a_i \mid i\in J\}$, $J\in\mathcal{F}(0+)$,
is linearly dependent.
Let $L=\mathrm{span}\{a_i \mid i\in J\}$ and
$N=\{x\mid a_i^\top x=0,\,i\in J\}$.
Then $|J|\ge\dim(L)+1=n-\dim(N)+1$ and hence $\dim(N)\ge n+1-|J|$.
On the other hand, $\cap_{i\in J} \partial H_i\supseteq N + x_0$,
where $x_0\in \cap_{i\in J} \partial H_i$.
This contradicts the fact
that $\{ \partial H_i \mid i\in J\}$ is in the general position.

(ii)
The number of linearly independent vectors $\{ a_i \mid i\in J \}$
must be at most $\dim\mathrm{span}\{a_i\}=\mathrm{rank}(A)$. 
\end{proof}

\begin{lemma}
\label{lem:number}
Let $\mathcal{F}$ and $\mathcal{F}(0+)$ be simplicial complexes
defined in Proposition \ref{prop:at} and Lemma \ref{lem:general}, respectively.
Then $\mathcal{F}\supseteq\mathcal{F}(0+)$.
Consequently, $|\mathcal{F}|\ge |\mathcal{F}(0+)|$.
\end{lemma}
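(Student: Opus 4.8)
The plan is to prove the contrapositive: if $J\notin\mathcal{F}$ then $J\notin\mathcal{F}(0+)$. Granting this, $\mathcal{F}\supseteq\mathcal{F}(0+)$ follows at once, and $|\mathcal{F}|\ge|\mathcal{F}(0+)|$ is then immediate. (For $J=\emptyset$ there is nothing to prove, since neither family contains $\emptyset$, so I will take $J\ne\emptyset$.) The mechanism behind the proof is a theorem of the alternative (Farkas' lemma): a certificate of \emph{infeasibility} of the unperturbed linear system consists of multipliers that involve only $A$ and $b$, and such a certificate persists under an arbitrarily small change of $b$. The task is thus to show that the certificate witnessing $J\notin\mathcal{F}$ also witnesses $J\notin\mathcal{F}(\varepsilon)$ for all small $\varepsilon>0$.

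First I would translate the two membership conditions into linear feasibility problems. Since
\[
 \cap_{i\in J}F_i
 =\{x\mid a_i^\top x=b_i,\ i\in J\}\cap\{x\mid a_j^\top x\le b_j,\ j=1,\ldots,m\}
\]
and the inequalities indexed by $i\in J$ are implied by the corresponding equalities, $J\in\mathcal{F}$ holds if and only if the system
\[
 S_0:\qquad a_i^\top x=b_i\ \ (i\in J),\qquad a_j^\top x\le b_j\ \ (j\notin J)
\]
is feasible; similarly $J\in\mathcal{F}(\varepsilon)$ holds if and only if the system $S_\varepsilon$, obtained from $S_0$ by replacing each $b_i$ with $b_i(\varepsilon)=b_i+\varepsilon^i$, is feasible. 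Now suppose $J\notin\mathcal{F}$, so that $S_0$ is infeasible. By the theorem of the alternative for a mixed equality/inequality system, there exist real numbers $\lambda_i$ $(i\in J)$ and nonnegative numbers $\mu_j$ $(j\notin J)$ such that
\[
 \sum_{i\in J}\lambda_i a_i+\sum_{j\notin J}\mu_j a_j=0
 \qquad\mbox{and}\qquad
 \sum_{i\in J}\lambda_i b_i+\sum_{j\notin J}\mu_j b_j<0.
\]

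Next I would feed the same multipliers into $S_\varepsilon$. The linear relation $\sum_{i\in J}\lambda_i a_i+\sum_{j\notin J}\mu_j a_j=0$ is unaffected, while the scalar becomes
\[
 \sum_{i\in J}\lambda_i b_i(\varepsilon)+\sum_{j\notin J}\mu_j b_j(\varepsilon)
 =\Bigl(\sum_{i\in J}\lambda_i b_i+\sum_{j\notin J}\mu_j b_j\Bigr)
 +\Bigl(\sum_{i\in J}\lambda_i\varepsilon^i+\sum_{j\notin J}\mu_j\varepsilon^j\Bigr).
\]
The first bracket is strictly negative and the second tends to $0$ as $\varepsilon\downarrow0$, hence the whole quantity remains negative for all sufficiently small $\varepsilon>0$. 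Applying the theorem of the alternative in the other direction, $S_\varepsilon$ is infeasible for such $\varepsilon$, that is, $J\notin\mathcal{F}(\varepsilon)$, and by Lemma \ref{lem:general}(ii) this gives $J\notin\mathcal{F}(0+)$. This completes the contrapositive, and hence the proof.

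The step I expect to be the main obstacle is exactly this transfer of the certificate: one has to set up the theorem of the alternative for systems that mix equalities and inequalities, and keep careful track of the strict inequality in the scalar. It is worth stressing why the more naive approach does not work directly: taking $x_\varepsilon\in\cap_{i\in J}F_i(\varepsilon)$ and attempting to pass to a limit as $\varepsilon\downarrow0$ fails because the $x_\varepsilon$ need not stay bounded --- the faces of $K(\varepsilon)$ can recede to infinity as the perturbation is switched off --- so one would be forced to analyze the recession cone of $K$. Passing to the dual side through linear programming avoids that analysis altogether and makes the $\varepsilon\to0$ limit trivial.
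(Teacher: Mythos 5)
Your proof is correct, but it runs in the opposite direction from the paper's and uses a dual (certificate-based) mechanism rather than a primal (geometric) one. The paper proves the inclusion directly: assuming $J\in\mathcal{F}(0+)$, it uses Corollary \ref{co:li} to see that $E_J=\cap_{i\in J}\partial H_i$ is a nonempty affine subspace, then argues by contradiction that if $E_J\cap K=\emptyset$ these two disjoint closed polyhedra admit a separating hyperplane at positive distance $\delta$ from both, and that a perturbation moving the constraint boundaries by less than $\delta$ cannot destroy this separation, contradicting $E_J(\varepsilon)\cap K(\varepsilon)\ne\emptyset$. You instead prove the contrapositive via the theorem of the alternative: an infeasibility certificate $(\lambda,\mu)$ for the unperturbed system depends only on $A$ and $b$ through a strict linear inequality, and since the perturbation changes the right-hand side by $O(\varepsilon)$ with the multipliers fixed, the certificate survives for all small $\varepsilon$, so $J\notin\mathcal{F}(\varepsilon)=\mathcal{F}(0+)$. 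The two arguments are essentially dual to each other (separation versus Farkas), but the trade-offs differ: your route needs no appeal to Corollary \ref{co:li} and sidesteps the quantitative geometric points the paper's proof relies on --- that two disjoint closed polyhedra (possibly unbounded) can be strongly separated, and that the perturbed sets $E_J(\varepsilon)$ and $K(\varepsilon)$ stay within a controlled distance of $E_J$ and $K$ --- while the paper's route stays closer to the geometric picture of faces receding under perturbation. Your closing remark about why a naive limit of points $x_\varepsilon\in\cap_{i\in J}F_i(\varepsilon)$ fails (lack of boundedness) is accurate and is precisely the difficulty the paper's separation argument, and your dual argument, are each designed to avoid.
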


\begin{proof}
Suppose that $J\in \mathcal{F}(0+)$.  We prove that $J\in \mathcal{F}$.
From Lemma \ref{lem:general}, for all sufficiently small $\varepsilon>0$,
it holds that $E_J(\varepsilon)\cap K(\varepsilon)\ne\emptyset$, where
$E_J(\varepsilon) = \cap_{i\in J} \partial H_i(\varepsilon)
 = \{x \mid A_J^\top x = b(\varepsilon)\}$
with $A_J=(a_j)_{j\in J}\ (n\times |J|)$.
Let $E_J = E_J(0)$.  From Corollary \ref{co:li},
the column vectors of $A_J$ are linearly independent,
and hence $E_J\ne\emptyset$.

Assume that $E_J\cap K=\emptyset$.
Since both $E_J$ and $K$ are closed polyhedra in $\mathbb{R}^n$,
there is a hyperplane separating $E_J$ from $K$, and the distances
of the hyperplane from $E_J$ and from $K$ are greater than $\delta>0$.
By choosing $\varepsilon>0$ such that
\[
 \max_{i=1,\ldots,m}\frac{\varepsilon}{\Vert a_i\Vert} < \delta,
\]
this hyperplane also separates $E_J(\varepsilon)$ from $K(\varepsilon)$.
Therefore, $E_J(\varepsilon)\cap K(\varepsilon)=\emptyset$, which contradicts
the assumption that $J\in\mathcal{F}(0+)$.
Hence, $E_J\cap K\ne\emptyset$, from which $J\in\mathcal{F}$ follows.
\end{proof}

The following theorem gives another version of the abstract tube associated
with a closed convex polyhedron $K$.
This is an improvement of Corollary 1 of
Naiman and Wynn (1997)\cite{Naiman-Wynn97},
who proved (\ref{at-2}) for almost all $x$ (i.e., a weak abstract tube).

\begin{theorem}
\label{thm:at}
The pair $(\mathcal{H},\mathcal{F}(0+))$
is an abstract tube in the sense that
\begin{equation}
\label{at-2}
\mathbf{1}_{\cup_{i=1}^m H_i^c}(x) = \sum_{J\in\mathcal{F}(0+)}
 (-1)^{|J|-1} \mathbf{1}_{\cap_{i\in J} H_i^c}(x) 
 \quad\mbox{for all $x\in\mathbb{R}^n$}.
\end{equation}
\end{theorem}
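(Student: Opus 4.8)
The plan is to deduce the identity (\ref{at-2}) from Proposition \ref{prop:at} applied not to $K$ itself but to each perturbed polyhedron $K(\varepsilon)$ of (\ref{Ke}), and then to use the fact that the perturbation is directed \emph{outward} to transfer the identity back to $K$ without losing a single point $x$.

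First I would observe that, since $K\subseteq K(\varepsilon)$ and $K\ne\emptyset$, every $K(\varepsilon)$ is a nonempty closed convex polyhedron described by the $m$ inequalities $a_i^\top x\le b_i(\varepsilon)$ with $a_i\ne 0$; hence Proposition \ref{prop:at} applies verbatim and gives, for every $x\in\mathbb{R}^n$,
\[
 \mathbf{1}_{\cup_{i=1}^m H_i(\varepsilon)^c}(x)
 = \sum_{J\in\mathcal{F}(\varepsilon)} (-1)^{|J|-1}\,\mathbf{1}_{\cap_{i\in J} H_i(\varepsilon)^c}(x).
\]
By Lemma \ref{lem:general}(ii) there is a $\delta_*>0$ with $\mathcal{F}(\varepsilon)=\mathcal{F}(0+)$ for all $\varepsilon\in(0,\delta_*)$, so the summation range already coincides with the one in (\ref{at-2}).

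Next, I would fix an arbitrary $x\in\mathbb{R}^n$ and compare the half spaces at level $\varepsilon$ with those at level $0$ coordinate by coordinate. If $a_i^\top x>b_i$, then $a_i^\top x>b_i+\varepsilon^i$ as soon as $\varepsilon<(a_i^\top x-b_i)^{1/i}$; if $a_i^\top x\le b_i$, then $a_i^\top x\le b_i<b_i+\varepsilon^i$ for every $\varepsilon>0$. This is precisely where the outer direction enters: for an inner perturbation the boundary points $a_i^\top x=b_i$ would fall on the wrong side, which is exactly why Naiman and Wynn (1997)\cite{Naiman-Wynn97} could only obtain the identity for almost all $x$. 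Consequently, putting
\[
 \varepsilon_0(x)=\min\Bigl\{\delta_*,\ \min\{(a_i^\top x-b_i)^{1/i} : a_i^\top x>b_i\}\Bigr\}
\]
(with the inner minimum over the empty set read as $+\infty$), for every $\varepsilon\in(0,\varepsilon_0(x))$ we have $\mathbf{1}_{H_i(\varepsilon)^c}(x)=\mathbf{1}_{H_i^c}(x)$ for all $i$, hence $\mathbf{1}_{\cap_{i\in J} H_i(\varepsilon)^c}(x)=\mathbf{1}_{\cap_{i\in J} H_i^c}(x)$ for every $J$ and $\mathbf{1}_{\cup_i H_i(\varepsilon)^c}(x)=\mathbf{1}_{\cup_i H_i^c}(x)$. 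Substituting these equalities into the displayed identity for $K(\varepsilon)$, with $\mathcal{F}(\varepsilon)=\mathcal{F}(0+)$, yields (\ref{at-2}) at $x$, and since $x$ was arbitrary the theorem follows.

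There is no genuinely hard step here; the point to be careful about is that the threshold $\varepsilon_0(x)$ must be allowed to depend on $x$ --- equivalently, one is taking a pointwise limit $\varepsilon\downarrow 0$ inside a finite sum --- and, more importantly, that the inclusions $H_i(\varepsilon)^c\subseteq H_i^c$ coming from $b_i(\varepsilon)>b_i$ make the indicator values actually \emph{stabilize} rather than merely converge up to a null set, so that (\ref{at-2}) is recovered for all $x$ and not only for almost all $x$.
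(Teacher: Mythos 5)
Your proposal is correct and follows essentially the same route as the paper: apply Proposition \ref{prop:at} to the perturbed polyhedron $K(\varepsilon)$ with $\mathcal{F}(\varepsilon)=\mathcal{F}(0+)$ (Lemma \ref{lem:general}) and let $\varepsilon\downarrow 0$ pointwise, using the outward direction $b_i(\varepsilon)>b_i$ so that points with $a_i^\top x=b_i$ keep indicator value $0$ and all indicators stabilize for small $\varepsilon$. The only cosmetic difference is that you treat every $x$ uniformly with an $x$-dependent threshold $\varepsilon_0(x)$, whereas the paper splits into the cases $x\notin\cup_i\partial H_i$, $x\in K\cap\cup_i\partial H_i$, and $x\in K^c\cap\cup_i\partial H_i$, invoking the same stabilization argument in the last case.
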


\begin{proof}
As in Corollary 1 of Naiman and Wynn (1997)\cite{Naiman-Wynn97},
it can be easily proved that
(\ref{at-2}) holds for $x\notin \cup_{i=1}^m \partial H_i$.
We show that (\ref{at-2}) holds for every $x\in \cup_{i=1}^m \partial H_i$.

If $x\in K \cap \cup_{i=1}^m \partial H_i$,
i.e., $x$ is on the boundary of $K$,
then $x\notin H_i^c$ for all $i$, and (\ref{at-2}) holds as $0=0$.

Suppose that $x\in K^c \cap \cup_{i=1}^m \partial H_i$.
Recall that for sufficiently small $\varepsilon>0$, 
\begin{equation}
\label{at-3}
 \mathbf{1}_{\cup_{i=1}^m H_i(\varepsilon)^c}(x) = \sum_{J\in\mathcal{F}(0+)}
 (-1)^{|J|-1} \mathbf{1}_{\cap_{i\in J} H_i(\varepsilon)^c}(x)
\end{equation}
holds.
Since $K$ is closed and $\varepsilon$ is sufficiently small,
the left side of (\ref{at-3}) becomes
$\mathbf{1}_{\cup_{i=1}^m H_i(\varepsilon)^c}(x)
=\mathbf{1}_{\cup_{i=1}^m H_i^c}(x)=1$.

When $x\in\partial H_i$, $a_i^\top x = b_i \le b_i(\varepsilon)$, and hence
$\mathbf{1}_{H_i^c}(x)=\mathbf{1}_{H_i(\varepsilon)^c}(x)=0$.
When $x\notin\partial H_i$,
$\mathbf{1}_{H_i^c}(x)=\mathbf{1}_{H_i(\varepsilon)^c}(x)$,
because the distance between $x$ and $\partial H_i$ is positive. 
Therefore, in the right side of (\ref{at-3}),
$\mathbf{1}_{\cap_{i\in J} H_i(\varepsilon)^c}(x)
= \prod_{i\in J}\mathbf{1}_{H_i(\varepsilon)^c}(x)
= \prod_{i\in J}\mathbf{1}_{H_i^c}(x)
= \mathbf{1}_{\cap_{i\in J} H_i^c}(x)$.

Summarizing the above, even when
$x\in K^c \cap \cup_{i=1}^m \partial H_i$,
the expression (\ref{at-3}) with $H_i(\varepsilon)^c$ replaced by $H_i^c$
holds, which is nothing but (\ref{at-2}).
\end{proof}

\begin{remark}
\label{rem:improvement}
The abstract tube $(\mathcal{H},\mathcal{F}(0+))$ substantially improves
the abstract tube $(\mathcal{H},\mathcal{F})$ because
\vspace*{-1.5mm}
\begin{itemize}
\item[(i)]
the number of elements of $\mathcal{F}(0+)$ is not greater than
that of $\mathcal{F}$ (Lemma \ref{lem:number}), and
\item[(ii)]
$\cap_{i\in J} H_i^c$, $J\in\mathcal{F}(0+)$,
is a simple cone in $\mathbb{R}^n$,
or is the direct sum of a simple cone and a linear subspace %
 (Corollary \ref{co:li}).  This is not always the case for $\mathcal{F}$.
\end{itemize}
\end{remark}

We present examples of abstract tubes for perturbed polyhedra below.

\begin{example}
Applying a perturbation to the inequality system in Example \ref{ex:pyramid}:
\begin{align*}
-x_1 - x_2 + x_3 \le 1 +\varepsilon^1 & \Hspace [1] \\
-x_1 + x_2 + x_3 \le 1 +\varepsilon^2 & \Hspace [2] \\
+x_1 + x_2 + x_3 \le 1 +\varepsilon^3 & \Hspace [3] \\
+x_1 - x_2 + x_3 \le 1 +\varepsilon^4 & \Hspace [4]
\end{align*}
where $\varepsilon$ is an infinitesimal positive real number.
The corresponding simplicial complex is shown to be
\[
 \mathcal{F}(0+) =
 \{ 1,\, 2,\, 3,\, 4,\, 12,\, 14,\, 23,\, 24,\, 34,\, 124,\, 234 \},
\]
which is a proper subset of $\mathcal{F}$ in (\ref{F}).
The number of terms $|\mathcal{F}(0+)|=11$ is less than $|\mathcal{F}(0+)|=15$.
The maximum number of elements of $J\in \mathcal{F}(0+)$ is $3\,(=n)$.
The difference between $\mathcal{F}$ and $\mathcal{F}(0+)$ is
\[
 \mathcal{F}\setminus\mathcal{F}(0+) =
 \{ 13,\, 123,\, 134,\, 1234 \}.
\]
The members of $\mathcal{F}\setminus\mathcal{F}(0+)$ are characterized as
the sets containing $13$, that is,
the non-existent edge in $K(\varepsilon)$.
\end{example}

\begin{example}
Applying a perturbation to the inequality system in Example \ref{ex:redundant}:
\begin{align*}
 +x_1 - x_2 + 0 x_3 \le 0 +\varepsilon^1 & \Hspace [1] \\
 -x_1 - x_2 + 0 x_3 \le 0 +\varepsilon^2 & \Hspace [2] \\
 0 x_1 - x_2+ 0 x_3 \le 0 +\varepsilon^3 & \Hspace [3]
\end{align*}
The corresponding simplicial complex is shown to be
\[
 \mathcal{F}(0+) = \{ 1,\, 2,\, 3,\, 13,\, 23 \}.
\]
The largest $|J|$, $J\in\mathcal{F}(0+)$, is 2,
as proved in (ii) of Corollary \ref{co:li}.
Note that a different perturbation
obtained by altering the order of inequalities
\begin{align*}
 +x_1 - x_2 + 0 x_3 \le 0 + \varepsilon^2 & \Hspace [1] \\
 -x_1 - x_2 + 0 x_3 \le 0 + \varepsilon^3 & \Hspace [2] \\
 0 x_1 - x_2+ 0 x_3 \le 0 + \varepsilon^1 & \Hspace [3]
\end{align*}
yields a different simplicial complex
\[
 \mathcal{F}(0+) = \{ 1,\, 2,\, 12 \}.
\]
This is an example where $|\mathcal{F}(0+)|$ depends on the perturbation.
\end{example}

Taking the expectation of (\ref{at-2}) with respect to
the normally distributed random vector $x$, we obtain
\[
 1- P(K) = \sum_{J\in\mathcal{F}(0+)}
 (-1)^{|J|-1} P(\cap_{i\in J} H_i^c).
\]
As mentioned in Remark \ref{rem:improvement},
$\cap_{i\in J} H_i^c$ appearing in the right side is
of the class of cones whose probability can be calculated by
the method of Miwa et al.\ (2003)\cite{Miwa-etal03}.

\section{Construction of the abstract tube} 
\label{sec:construction}

In order to construct the abstract tube $(\mathcal{H},\mathcal{F}(0+))$,
we need to determine whether the system
\begin{align*}
& a_i^\top x = b_i(\varepsilon), \ \ i\in J, \\
& a_i^\top x \le b_i(\varepsilon), \ \ i\notin J,
\end{align*}
has a solution for a given subset $J\subseteq \{1,\ldots,m\}$,
where $\varepsilon$ is an infinitesimal positive real number.
For that purpose,
linear programming (LP) techniques for degenerate systems are useful.

First rewrite the problem in a linear programming format as follows:
\begin{problem}
For a given subset $J\subseteq \{1,\ldots,m\}$ determine whether
there exists a feasible solution $x,y\ge 0\ (x,y\in\mathbb{R}^n)$ and
$u_i\ge 0\ (i=1,\ldots,m)$, $v_j\ge 0\ (j\in J)$ such that
\begin{align*}
& u_i + a_i^\top (x-y) - b_i(\varepsilon) = 0, \ \ i=1,\ldots,m, \\
& v_j - a_j^\top (x-y) + b_j(\varepsilon) = 0, \ \ j\in J.
\end{align*}
\end{problem}

To solve this problem
we first define a tableau (matrix) corresponding to the above linear system.
Let $A_J=(a_j)_{j\in J}$ ($n\times |J|$),
$b_J(\varepsilon)=(b_j(\varepsilon))_{j\in J}$ ($|J|\times 1$),
and define an $M\times N$ matrix
\[
[t_{ij}]_{M\times N} =
 \left[\begin{matrix}
   A^\top & -A^\top   & -b(\varepsilon) \\
-A_J^\top &  A_J^\top & b_J(\varepsilon) \\
 1_n^\top & 1_n^\top  & 0 \end{matrix}\right]_{M\times N}
\]
where $M=m+|J|+1$, $N=2n+1$ and $1_n^\top=(1,\ldots,1)_{1\times n}$.
An algorithm for checking the feasibility of $J$ is given as follows
(Theorem 4-4-3 of Iri, 1973\cite{Iri73}).

\begin{algorithm}
\label{alg:pivot}
\begin{itemize}
\item[1.]
Choose an index $i\,(=:i_0)$ such that $i\le M-1$ and $t_{iN}>0$.
If no such $i$ exists, then the system is feasible.

\smallskip
\item[2.]
Among $j$'s such that $j\le N-1$ and $t_{i_0,j}<0$
(if no such $j$ exists, then the system is not feasible),
choose an index $j\,(=:j_0)$ that maximizes $t_{Mj}/|t_{i_0,j}|$. 

\smallskip
\item[3.]
For $i\in \{1,\ldots,M\}\setminus\{i_0\}$ and
$j\in \{1,\ldots,N\}\setminus\{j_0\}$, let
\begin{align*}
t_{i j}     &:= t_{ij} - t_{ij_0}t_{i_0 j}/t_{i_0 j_0}, \\
t_{i j_0}   &:= t_{ij_0}/t_{i_0 j_0}, \\
t_{i_0 j}   &:= - t_{i_0 j}/t_{i_0 j_0}, \\
t_{i_0 j_0} &:= 1/t_{i_0 j_0}.
\end{align*}

\item[4.]
Go to Step 1.
\end{itemize}
In Step 1, $t_{iN}$ is a polynomial in $\varepsilon$.
$t_{iN}>0$ means that $k\,(0\le k\le n)$ exists such that
$t_{iN} = c_k \varepsilon^k + c_{k+1} \varepsilon^{k+1} + \cdots
 + c_n \varepsilon^n$ and $c_k>0$.
\end{algorithm}

Using Algorithm \ref{alg:pivot},
for every nonempty subset $J\subseteq\{1,\ldots,m\}$
we can confirm whether $J$ is feasible (i.e., $J\in\mathcal{F}(0+)$).
The following points should be incorporated into the implementation
of the algorithm.

\begin{remark}
Thanks to Corollary \ref{co:li},
the range of $J$ for checking the feasibility can be restricted to
\begin{equation}
\label{candidate}
 \mathcal{J}_{r,m} =
 \{ J \mid J \subseteq \{1,\ldots,m\},\,0< |J|\le r \},
\end{equation} 
where $r=\mathrm{rank}(A)$.
Moreover, if $J_1 \subset J_2$ and $J_1\notin\mathcal{F}(0+)$,
then $J_2\notin\mathcal{F}(0+)$.
This reduces the range for checking the feasibility.
\end{remark}

\begin{remark}
In Algorithm \ref{alg:pivot} the polynomial
$t_{iN}= c_0 + c_1 \varepsilon^1 + \cdots + c_n \varepsilon^n$ can be
represented as an $n+1$ vector $(c_0,c_1,\ldots,c_n)$.
In this representation $t_{iN}>0$ means $(c_0,c_1,\ldots,c_n)>(0,\ldots,0)$
in the lexicographic order.
This technique is known as the lexicographic method
in linear programming\cite{Iri73,Vanderbei01}.
\end{remark}

\begin{remark}
In Algorithm \ref{alg:pivot} the judgment that a value $z$ is positive
($z>0$) should be implemented as $z>\mathtt{eps}$, where $\mathtt{eps}$ is
a small positive number comparable to the machine epsilon.
For example, the machine epsilon in R (32-bit) is
$2^{-52} \doteq 2\times 10^{-16}$,
and in our prototype R program $\mathtt{eps}$ is set to $10^{-14}$. 
\end{remark}

\begin{remark}
In coding a computer program it is convenient to treat the candidate set
$\mathcal{J}_{r,m}$ in (\ref{candidate}) as a totally ordered set.
Note that
\[
 |\mathcal{J}_{r,m}| = \sum_{j=1}^r {m \choose j}, \quad
 |\mathcal{J}_{0,m}| = 0.
\]
We introduce a total order as follows:
For $J_1=\{ d_1,\ldots,d_{l_1} \}$ ($d_1<\cdots<d_{l_1}$) and
$J_2=\{ e_1,\ldots,e_{l_2} \}$ ($e_1<\cdots<e_{l_2}$)
in $\mathcal{J}_{r,m}$,
define $J_1\preceq J_2$ if $l_1<l_2$, or $l_1=l_2$ and
$(d_1,\ldots,d_{l_1})\le (e_1,\ldots,e_{l_1})$ (lexicographically).
Then $J=\{d_1,\ldots,d_l\}$ ($d_1<\cdots<d_l$) is the $L$-th smallest element
in $\mathcal{J}_{r,m}$, where
\begin{equation*}
 L= 1 + |\mathcal{J}_{l-1,m}|
 + \sum_{j=1}^l \sum_{k=d_{j-1}+1}^{d_j-1} {m-k \choose l-j}.
\end{equation*}
Here we let $d_0=0$, and
$\sum_{k=d_{j-1}+1}^{d_j-1}=0$ if $d_{j-1}+1>d_j-1$.

Conversely, for a given $L$ such that $1\le L\le |\mathcal{J}_{r,m}|$,
the corresponding $L$-th smallest element
$J=\{d_1,\ldots,d_l\}$ ($d_1<\cdots<d_l$) in $\mathcal{J}_{r,m}$
is obtained as follows:
\begin{equation*}
 l := \max\Bigl\{ l\ge 1 \ \big|\ |\mathcal{J}_{l-1,m}| < L \Bigr\},
\end{equation*}
and for $j=1,\ldots,l$, iteratively,
\begin{equation*}
 d_j := \max\biggl\{ d_j \ge d_{j-1}+1 \ \Big|\ %
 \sum_{i=1}^j \sum_{k=d_{i-1}+1}^{d_{i}-1} {m-k \choose l-i} < L
 - |\mathcal{J}_{l-1,m}|
 \biggr\}.
\end{equation*}
\end{remark}

\section{Tukey's studentized range}
\label{sec:studentized}

The multiple comparisons procedure is a typical application where
the multidimensional normal probability in (\ref{PK}) is required.
This is a statistical procedure for combining several statistical tests.
As an example, consider the comparison of $k$ normal means.
Suppose that, for $i=1,\ldots,k$, $X_i$ is independently distributed
according to
a normal distribution with mean $\mu_i$ and variance $\sigma^2_i$,
and consider testing the equality of the means $\mu_i$, $i=1,\ldots,k$.
For $X=(X_1,\ldots,X_k)$, Tukey's studentized range statistic is defined as
\begin{equation}
\label{tukey}
 T(X) = \max_{1\le i<j\le k} \frac{|X_i-X_j|}{\sqrt{\sigma^2_i+\sigma^2_j}}.
\end{equation}
This is a combination of the test statistics
$|X_i-X_j|/\sqrt{\sigma^2_i+\sigma^2_j}$ for testing $\mu_i=\mu_j$.
To determine critical values and power functions, we need
the probability of the event that the statistic (\ref{tukey})
takes a value less than or equal to a threshold $c$. 

The polyhedron $K=\{ x\in\mathbb{R}^k \mid T(x)\le c \}$ is not proper
in the sense that $K$ contains a 1-dimensional linear subspace
$L=\{x\in\mathbb{R}^k \mid x_1=\cdots=x_k \}$.
$K$ is decomposed as $K=K_1\oplus L$, where $K_1$ is
a proper polyhedron.
Fig.\ \ref{fig:studentized} is a picture of $K_1$ when $k=4$
and the $\sigma^2_i$ are all equal.
When $k\ge 4$ and the variances are equal (let $\sigma^2_i\equiv 1$),
the facets of $K$ are not in the general position because
$(X_1-X_2)/\sqrt{2}=c$, $(X_2-X_3)/\sqrt{2}=c$, and
$(X_3-X_4)/\sqrt{2}=c$ imply $(X_1-X_4)/\sqrt{2}=c$.
One can see that 4 facets share a vertex in Fig.\ \ref{fig:studentized}.
\begin{figure}[h]
\begin{center}
\scalebox{0.6}{\includegraphics{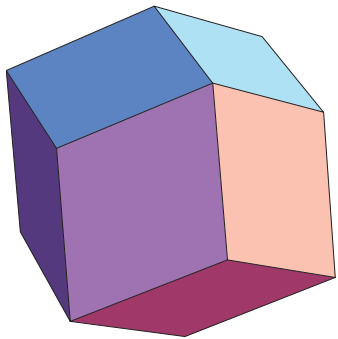}}
\end{center}
\caption{Studentized range ($k=4$)}
\label{fig:studentized}
\end{figure}

Table \ref{tab:studentized} shows the number of $\mathcal{F}(0+)$
for $k=2,\ldots,6$ and constant variances.
Because Algorithm \ref{alg:pivot} depends on the order of the inequalities,
the entries $|\mathcal{F}(0+)|$ in Table \ref{tab:studentized}
may vary when the order changes.
However, numerical experiments suggest that
the numbers $|\mathcal{F}(0+)|$ are independent of the order of inequalities. 
Moreover, even in cases where the variances $\sigma^2_i$ vary, 
$|\mathcal{F}(0+)|$ seems to be unchanged.
\begin{table}
\tbl{Number of terms $|\mathcal{F}(0+)|$ in the abstract tubes}{
\begin{tabular}{ccccccc} \toprule
$k$                 & 2 &  3 &  4 &   5 &    6 \\ \colrule
$m=k(k-1)$          & 2 &  6 & 12 &  20 &   30 \\
$|\mathcal{F}(0+)|$ & 2 & 12 & 62 & 320 & 1682 \\ \botrule
\end{tabular}
}
\begin{tabnote}
$m$ is the number of inequalities (facets).\\
\end{tabnote}
\label{tab:studentized}
\end{table}

In the statistical community there was a conjecture called
the ``Tukey-Kramer conjecture'' which is stated as follows:
Let $F_k(\sigma_1,\ldots,\sigma_k;c)=\Pr(T(X)\le c)$,
where $X\sim N_k(0,\mathrm{diag}(\sigma^2_i))$.
Then for any $k$ and $c$,
$F_k(\sigma_1,\ldots,\sigma_k;c)$ takes its minimum when the variances
$\sigma^2_i$ are all equal.
This conjecture was affirmatively proved by Hayter (1984)\cite{Hayter84}.
Fig.\ \ref{fig:tk} depicts the value $F_5(\sigma_1,\ldots,\sigma_5;c)$
with $\sigma_i^2=(10^s)^{(i-1)/4}$, $i=1,\ldots,5$,
for the range $s\in [-5,5]$.
The constant $c$ is chosen such that $F_5(1,\ldots,1;c)=0.95$.
The minimum is certainly attained at $s=0$, i.e., $\sigma_i^2\equiv 1$.

Finally, note that although the polyhedron $K$ depends on the threshold $c$,
the shape of $K$ is invariant for all $c>0$.
Once an abstract tube is obtained for some $c>0$, this abstract tube
is available for computing the distribution function $\Pr(T(X)\le c)$
for any $c$ in the null and non-null cases.

\begin{figure}[h]
\begin{center}
\vspace*{-5mm}
\scalebox{0.6}{\includegraphics{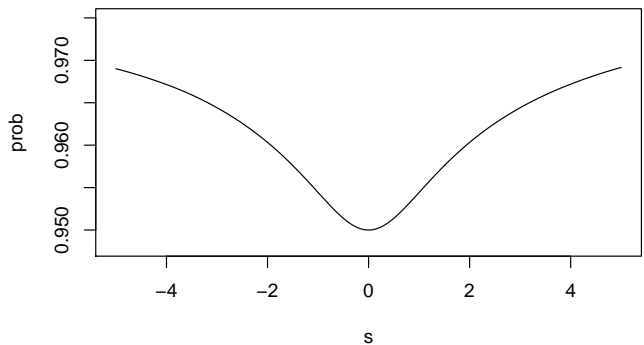}}
\end{center}
\vspace*{-5mm}
\caption{Tukey-Kramer conjecture ($k=5$)}
\label{fig:tk}
\end{figure}

\bibliographystyle{ws-procs9x6}
\bibliography{pivot}

\end{document}